\pdfoutput=1 %
\let\accentvec\vec
\documentclass[runningheads, envcountsame, a4paper]{llncs}

\let\vec\accentvec

\usepackage[T1]{fontenc} %
\usepackage{amssymb} %
\usepackage{mathtools} %
\usepackage[ruled,linesnumbered,resetcount,noend]{algorithm2e}
\usepackage[hidelinks]{hyperref}

\usepackage{tikz} %
\usetikzlibrary{automata,positioning,fit}

\newcommand{\ud}{\triangleq} %
\newcommand{\udr}{\stackrel{{\mbox{\tiny\ensuremath{\triangle}}}}{\Longleftrightarrow}} %
\newcommand{\gram}{{\mathcal{G}}} %
\newcommand{\auto}{{\mathcal{A}}} %
\newcommand{\Lang}{{\mathcal{L}}} %

\DeclarePairedDelimiter\set{\{}{\}}
\DeclareMathOperator{\lfp}{lfp}

\newcommand{\upw}{{\uparrow}} %

\makeatletter
\renewcommand\paragraph{\@startsection{paragraph}{4}{\z@}%
                     {-12\p@ \@plus -4\p@ \@minus -4\p@}%
                     {-0.5em \@plus -0.22em \@minus -0.1em}%
                     {\normalfont\normalsize\bfseries}}
\makeatother

\usepackage[textsize=scriptsize]{todonotes} %

\begin{document}

\title{A Uniform Framework for Language Inclusion Problems}
\author{
	Kyveli~Doveri\inst{1,2}\orcidID{0000-0001-9403-2860} \and
	Pierre~Ganty\inst{1}\orcidID{0000-0002-3625-6003} \and
	Chana Weil-Kennedy\inst{1}\orcidID{0000-0002-1351-8824}
}
\institute{
	IMDEA Software Institute, Madrid, Spain\\\email{firsname.lastname@imdea.org} \and Universidad Politécnica de Madrid, Spain
}
\authorrunning{K.\ Doveri, P.\ Ganty, and C.\ Weil-Kennedy}
\maketitle

\begin{abstract}
	We present a uniform approach for solving language inclusion problems.
	Our approach relies on a least fixpoint characterization and a quasiorder to compare words of the “smaller” language, reducing the inclusion check to a finite number of membership queries in the “larger” language.
	We present our approach in detail on the case of inclusion of a context-free language given by a grammar into a regular language.
	We then explore other inclusion problems and discuss how to apply our approach.
	\keywords{Formal Languages, Inclusion, Containment, Algorithm, Quasiorders}
\end{abstract}

\section{Introduction}
We are interested in the classical problem of language inclusion, which given two language acceptors, asks whether the language of one acceptor is contained into the language of the other one.
This problem is traditionally solved by complementing the “larger” language, intersecting with the “smaller” language and checking for emptiness.
Here, we avoid explicit complementation.
We present a simple and uniform approach for deciding language inclusion problems based on the use of quasiorder relations on words.
In this paper we are interested in the decidable cases of this problem, where the underlying alphabet of the language is a finite set of symbols.
Even though we focus mainly on words of finite length our approach also applies to languages of infinite words.
At its core, our approach relies on two notions: a \emph{fixpoint characterization} of the “smaller” language and a \emph{quasiorder} to compare words.
Intuitively, the language inclusion algorithms we derive leverage the fixpoint characterization to compute increasingly many words of the “smaller” language.
After a finite amount of time the computation is stopped.
The algorithm then tests whether each of the computed words of the “smaller” language also belong to the “larger” one.
If one word fails the test, the inclusion does not hold and we have a counterexample.
Whether we can correctly conclude that inclusion holds depends on whether we have computed “enough” words.
The r\^ole of the quasiorder is precisely that of detecting when “enough” words have been computed.
It must satisfy two properties: the first one ensures that it takes only a finite amount of time to compute “enough” words, and the second one ensures that if there exists a counterexample to inclusion, then it has been computed.

We present our approach in detail on the case of inclusion of a context-free language given by a grammar into a regular language, an EXPTIME-complete problem~\cite{kasaiProblemsFormalLanguage1992}.
This case is simple yet non-trivial.
After presenting our approach in Section \ref{sec:framework}, we give distinct quasiorders that can be used in the decision procedure in Section \ref{sec:instantiation}.
Section~\ref{sec:aspects} then dives into the algorithmic aspects related to using the so-called state-based quasiorders.
The state-based quasiorders enable some modifications of the inclusion algorithm ultimately leading to the so-called antichains algorithms~\cite{dewulfAntichainsNewAlgorithm2006,holikAntichainsVerificationRecursive2015}.
We also show how the saturation approach put forward by Esparza et al.~\cite{esparzaUniformFrameworkProblems2000} can be leveraged for the particular inclusion problem of a straight-line program\footnote{Straight-line program are context-free grammars where at most one word is derived from each grammar variable.
} into a finite state automaton.
In Section \ref{sec:ouverture} we talk about the other language classes to which our framework can be applied.
We revisit the case where the two languages are given by finite state automata.
Next, we investigate the case asking whether the trace set of a finite process (which is a regular language) is contained into the trace set of a Petri net, a case that was first solved by Esparza et al.~\cite{jancarPetriNetsRegular1999}.
To demonstrate the generality of the approach, we survey how it can also be leveraged for the case of two languages of infinite words accepted by Büchi automata.
We finish by briefly mentioning a few more cases that can be tackled using our approach.

This paper works as an overview of our previous work on the topic, providing a simplified explanation and pointers to the previous papers.
In particular, the section presenting our approach greatly simplifies the framework put forward in \cite{GantyRV2021}.

\section{Preliminaries}
\label{sec:preliminaries}

\subsubsection{Well-Quasiorders, Complete Lattices and Kleene Iterates}

A \emph{quasiorder} (qo) on a set \(E\), is a binary relation \(\ltimes\; \subseteq E\times E\) that is reflexive and transitive.
A quasiorder \(\ltimes\) is a \emph{partial order} when \(\ltimes\) is antisymetric (\(x \ltimes y\wedge y \ltimes x \implies x=y\)).
A \emph{complete lattice} is a set \(E\) and a partial order \(\ltimes\) on \(E\) such that every subset \(X\subseteq E\) has a least upper bound (the \emph{supremum}) in \(E\).

A sequence \(\{s_{n}\}_{n\in\mathbb{N}}\in E^{\mathbb{N}}\) on quasiordered set \((E, \ltimes)\) is \emph{increasing} if for every \(n\in \mathbb{N}\) we have \(s_{n}\ltimes s_{n+1}\).
For a function \(f\colon E \to E\) on a quasiordered set \((E, \ltimes)\) and for all \(n\in \mathbb{N}\), we define the \(n\)-th iterate \(f^{n}\colon E \to E\) of \(f\) inductively as follows: \(f^{0} \ud \lambda x\ldotp x\); \(f^{n+1} \ud f \circ f^{n}\).
The denumerable sequence of \emph{Kleene iterates} of \(f\) starting from the bottom value \(\bot\in E\) is given by \(\{f^{n}(\bot)\}_{n\in \mathbb{N}}\).
A \emph{fixpoint} of \(f\) is \(x\) such that \(f(x)=x\).
The \emph{least fixpoint} of \(f\) is the smallest fixpoint of \(f\) with respect to \(\mathord{\ltimes}\), if \(f\) has at least one fixpoint; it is denoted \(\lfp f\).
Recall that when \((E, \ltimes)\) is a complete lattice and \(f\colon E \to E\) is an monotone function (i.e. \(d\ltimes d'\implies f(d)\ltimes f(d'))\) then it follows from the Knaster–Tarski theorem that \(f\) has a least fixpoint \(\lfp f\) which, provided \(f\) is continuous\footnote{\(f\) is continuous if{}f \(f\) preserves least upper bounds of nonempty increasing chains.}, is given by the supremum of the increasing sequence of Kleene iterates of \(f\).

Given a quasiorder \(\mathord{\ltimes}\) on a set \(E\), and \(X \subseteq E\) a subset, let \(\upw_\ltimes X\) denote the upward closure of \(X\) with respect to \(\mathord{\ltimes}\) given by \(\set{y \in E\mid \exists x\in X.
	x\ltimes y}\).
A quasiorder \(\ltimes\) on \(E\) is a \emph{well-quasiorder} (wqo) if for every infinite sequence \(\{S_n\}_{n\in\mathbb{N}}\in\wp(E)^{\mathbb{N}}\) such that \(\upw_\ltimes S_1 \subseteq \upw_\ltimes S_2 \subseteq \cdots \) we have that there exists \(i\in\mathbb{N}\) such that \(\upw_\ltimes S_{i} \supseteq \upw_\ltimes S_{i+1}\), namely, \(\mathord{\ltimes}\) is a wqo if{}f there is no infinite strictly increasing chain of upward closed subsets in \(E\).

There are other equivalent definitions for wqo but the one using upward closed sets is the most convenient for our purpose.
One property of wqo we will leverage throughout the paper is that the component wise lifting of a wqo remains a wqo, namely, given \(n\in\mathbb{N}\), if \(\mathord{\ltimes}\) is a wqo then so is \(\mathord{\ltimes^n}\).

Finally, we introduce the lifting of the qo \(\mathord{\ltimes}\) to sets by defining \(\mathord{\sqsubseteq_{\ltimes}} \subseteq \wp(E) \times \wp(E)\) as follows: \[X \sqsubseteq_{\ltimes} Y \udr \forall x\in X, \exists y\in Y, y\ltimes x\enspace.
\]
It is routine to check that \(\mathord{\sqsubseteq_{\ltimes}}\) is a quasiorder as well and also that \(X \sqsubseteq_{\ltimes}
Y\) if{}f \( \upw_\ltimes X \subseteq \upw_\ltimes Y\).
Given the previous equivalence, the rationale to introduce \(\mathord{\sqsubseteq_{\ltimes}}\) is algorithmic since \(\mathord{\sqsubseteq_{\ltimes}}\) is straightforward to implement for finite sets \(X,Y\) given a decision procedure for \(\mathord{\ltimes}\); whereas \(\upw_\ltimes X \subseteq \upw_\ltimes Y\) does not give a straightforward implementation even when \(X\) and \(Y\) are finite and \(\mathord{\ltimes}\) is decidable.

\subsubsection{Alphabets, Words and Languages}

An \emph{alphabet} is a nonempty finite set of symbols, generally denoted by \(\Sigma\).
A \emph{word} is a sequence of symbols over the alphabet \(\Sigma\).
The set of finite words and the set of infinite words over \(\Sigma\) are denoted by \(\Sigma^{*}\) and \(\Sigma^{\omega}\) respectively.
We denote by \(\varepsilon\) the \(\emph{empty word}\) and define \(\Sigma^{+}\ud \Sigma^{*}\backslash \{\varepsilon\}\).
A \emph{language of finite words} over \(\Sigma\) is a subset of \( \Sigma^{*}\).
A \emph{language of infinite words} or \(\omega\)-\emph{language} over \(\Sigma\) is a subset of \( \Sigma^{\omega}\).

\subsubsection{Finite Automata}

A \emph{finite automaton} (FA) on an alphabet \(\Sigma\) is a tuple \(\auto=(Q,\delta,q_{I},F)\) where \(Q\) is a finite set of states including an initial state \(q_{I}\in Q\), \(\delta\colon Q\times \Sigma \rightarrow \wp(Q)\) is a transition function, and \(F\subseteq Q\) is a subset of final states.
Let \(q \overset{a}\rightarrow q'\) denote a transition \(q' \in \delta(q,a)\) that we lift to finite words by transitive and reflexive closure, thus writing \(q \xrightarrow{u}\!
\!^* q'\) with \(u\in\Sigma^*\).
The language of finite words accepted by \(\auto\) is \(L^{*}(\auto) \ud \textstyle \set{u\in \Sigma^{*} \mid \exists q\in F.
	q_{I} \xrightarrow{u}\!\!^* q}\).
An \emph{accepting trace} of \(\auto\) on an infinite word \(w = a_0a_1\cdots \in \Sigma^\omega\) is an infinite sequence \(q_0 \overset{a_0}\rightarrow q_1 \overset{a_1}\rightarrow q_2 \cdots\) such that \(q_{0}=q_{I}\) and \(q_{j} \in F\) for infinitely many \(j\)’s.
The \(\omega\)-language accepted by \(\auto\) is \(L^{\omega}(\auto)\ud \{ \xi \in \Sigma^\omega \mid \text{ there is an accepting trace of } \auto \text{ on } \xi\}\).
We call \(\auto\) a Büchi automaton (BA) when we consider it as an acceptor of infinite words.
A language \(L \subseteq \Sigma^{\omega}\) is \(\omega\)-\emph{regular} if \(L=L^{\omega}(\auto)\) for some BA~\(\auto\).

\subsubsection{Context-free Grammars}

A \emph{context-free grammar} (CFG) or simply \emph{grammar} on \(\Sigma\) is a tuple \(\gram=(V,P)\) where \(V=\{X_{1},\ldots,X_{n}\}\) is the finite set of variables, and \(P\) is the finite set of production rules \(X_{i} \rightarrow \beta\) where \(\beta \in (V\cup \Sigma)^{*}\).
Given \(\alpha,\alpha' \in (V\cup \Sigma)^{*}\), we write \(\alpha \Rightarrow \alpha'\) if there exists \(\gamma, \delta \in (V\cup \Sigma)^*\) and \(X_i \rightarrow \beta \in P\) such that \(\alpha = \gamma X_i \delta\) and \(\alpha' = \gamma \beta \delta\).
The reflexive and transitive closure of \(\mathord{\Rightarrow}\) is written \(\mathord{\Rightarrow^*}\).
For \(i\in [1,n]\), let \(L_i(\mathcal{G}) = \set{w \in \Sigma^* \mid X_i \Rightarrow^* w}\).
When we omit the subscript \(i\) it is meant to be \(1\), hence we define the \emph{language accepted} by \(\gram\) as \(L(\gram)\), that is \(L_1(\gram)\) given by \(\set{w \in \Sigma^* \mid X_1 \Rightarrow^* w}\).
A CFG is in Chomsky normal form (CNF) if all of its production rules are of the form \(X_{j} \rightarrow X_{k}\,X_{l}\), \(X_{j} \rightarrow a\), or \(X_{1}\rightarrow \varepsilon\) where \(X_{j},X_{k},X_{l}\in V\) and \(a\in \Sigma\).
\begin{example}
	\label{ex:cfg}
	Let \(\gram'=(V',P')\) be the CFG on \(\Sigma=\set{a,b}\) given by \(V'=\set{X_1}\) and \(P'=\set{X_1 \rightarrow \varepsilon, X_1 \rightarrow a\, X_1\, b}\).
	Notice that this grammar is not in CNF.
	The language \(L(\gram')\) is \(\set{a^n b^n \mid n \ge 0}\).
	We can define a grammar \(\gram=(V,P)\) in CNF such that \(L(\gram)=L(\gram')\).
	Let \(V = \set{X_1, X_2, X_3, X_4}\) and \(P\) be the rules
	\begin{align*}
		X_1 & \rightarrow \varepsilon & X_2 & \rightarrow a & X_1 & \rightarrow X_2\, X_3 \\ && X_4 &\rightarrow b & X_3 &\rightarrow X_1\,X_4 \enspace.
	\end{align*}
\end{example}

\section{Algorithm}
\label{sec:framework}

In this section we outline the quasiorder-based approach to solving the inclusion problem of a context-free language into a regular language, which reduces the problem to finitely many membership queries.
A membership query is a check of whether a given word belongs to a given language.
More precisely, we consider the inclusion problem \(L(\gram) \subseteq M\) where \(\gram\) is a CFG and \(M\) is a regular language.
To solve the inclusion problem, we select a subset \(S\) of words of \(L(\gram)\) such that (i) \(S\) is finite, (ii) \(S\) is effectively computable, and (iii)
\(S\) contains a counterexample to \(L(\gram)\subseteq M\) if the inclusion does not hold.
Upon computing such a set \(S\) the inclusion check \(L(\gram)\) into \(M\) reduces to finitely many membership queries of the words of \(S\) into \(M\).

More concretely, the computation of \(S\) will be guided by a quasiorder \(\mathord{\ltimes}\) on words.
The set \(S\) of words is such that, as per the quasiorder \(\mathord{\ltimes}\), the following holds: \(S\subseteq L(\gram)\) and \(L(\gram) \sqsubseteq_{\ltimes} S\).
\footnote{We can even relax the inclusion \(S\subseteq L(\gram)\) to the weaker condition \(S \sqsubseteq_{\ltimes}
	L(\gram)\).
}
In that setting, for \(S\) to satisfy (i), we will require (1) \(\mathord{\ltimes}\) to be a well-quasi order.
Moreover for \(S\) to comply with (ii), we will require (2) \(\mathord{\ltimes}\) to be decidable but also ``monotonic'' (it will become clear what it means and why we need this condition later).
Finally, for \(S\) to comply with (iii), we first formalize the requirement:
\begin{equation}
	\tag{\(\star\)}
	\label{eq:reduction}
	L(\gram) \subseteq M \iff \forall u\in S, ~u\in M \enspace.
\end{equation}
Intuitively, it means that whatever set \(S\) we select needs to be included in \(M\) if the inclusion \(L(\gram) \subseteq M\) holds and otherwise the set \(S\) needs to contain at least one counterexample to the inclusion.
To achieve this, we will require (3) \(\ltimes\) to be \(M\)-preserving: A quasiorder \(\mathord{\ltimes} \subseteq \mathord{\Sigma^{*}\times \Sigma^{*}}\) is said to be \(M\)-\emph{preserving} when for every \(u,v \in \Sigma^{*}\) if \(u \in M\) and \(u \ltimes v\) then \(v \in M\)\footnote{This is equivalent to saying that M is upward-closed w.r.t. the quasiorder \(\mathord{\ltimes} \).}.
To see how \(\ltimes\) being \(M\)-preserving enforces \eqref{eq:reduction} let us first assume that \(L(\gram)\subseteq M\) holds.
Since \(S\subseteq L(\gram)\) and \(L(\gram)\subseteq M\), the direction \(\mathord{\Rightarrow}\) of Equation~\eqref{eq:reduction} holds.
This still holds if we have the weaker condition \(S \sqsubseteq_{\ltimes} L(\gram)\).
In this case, for every \(s\in S\) there exists \(u\in L(\gram)\) such that \(u\ltimes s\), hence the assumption \(L(\gram)\subseteq M\) together with the \(M\)-preservation show that \(s\in M\).
Now assume that the right-hand side of \eqref{eq:reduction} holds and let \(u\in L(\gram)\).
Since \(L(\gram)\sqsubseteq_{\ltimes} S\) there is \(s\in S\) such that \(s\ltimes u\).
Since \(s\in M\), \(s\ltimes u\) and \(\ltimes\) is \(M\)-preserving we have \(u\in M\).

To compute \(S\) we leverage a fixpoint characterization of \(L(\gram)\).
Then we effectively compute the set \(S\) by computing finitely many Kleene iterates of the fixpoint characterization of \(L(\gram)\) so as to obtain a set \(S\) such that \(L(\gram) \sqsubseteq_{\ltimes} S\).
In order to decide we have computed enough Kleene iterates we rely on \(\sqsubseteq_{\ltimes}\) (which is why we need \(\ltimes\) to be decidable).
We also need \(\ltimes\) to satisfy a so-called monotonicity condition for otherwise we cannot guarantee that \(S\) satisfies \(L(\gram) \sqsubseteq_{\ltimes} S\).

Let us now turn to the characterization of \(L(\gram)\) as the least fixpoint of a function.
Fix a grammar \(\gram\) in CNF and \(M\) a regular language.
Our algorithms also work when the grammar \(\gram\) is not given in CNF, we assume CNF for the simplicity of the presentation.

\paragraph{Least Fixpoint Characterization.}
To compute our finite set \(S\) we use a characterization of \(L(\gram)\) as the least fixpoint of a function, and show that we can iteratively compute the function's Kleene iterates until we compute a set \(S\) such that \(L(\gram)\sqsubseteq_{\ltimes} S\).
\begin{example}
	\label{ex:iterations}
	Take the grammar \(\gram\) of Example~\ref{ex:cfg} such that \(L(\gram)=\set{a^n b^n \mid n \ge 0}\).
	We define the function \(F_{\gram}\colon \wp(\Sigma^*)^4 \rightarrow \wp(\Sigma^*)^4\) where \((L_1,L_2,L_3,L_4)\in \wp(\Sigma^*)^{4}\) is a vector of languages of finite words as follows: \[ F_{\gram}\colon (L_1,L_2,L_3,L_4) \mapsto (L_2L_3 \cup \set{\varepsilon}, \set{a}, L_1L_4, \set{b})\enspace .
	\]
	If we apply  \(F_{\gram}\) to the empty vector \((\emptyset, \emptyset, \emptyset, \emptyset)\in \wp(\Sigma^*)^{4}\),
	we get the vector of languages \((\set{\varepsilon}, \set{a}, \emptyset, \set{b})\)
	(recall that \(L\emptyset = \emptyset\) and \(\emptyset L = \emptyset\) for any language \(L\)).
	Applying \(F_{\gram}\) to this last vector we obtain \((\set{\varepsilon}, \set{a}, \set{b}, \set{b})\), i.e. \(F_{\gram}^2(\emptyset, \emptyset, \emptyset, \emptyset)=(\set{\varepsilon}, \set{a}, \set{b}, \set{b})\).
	We give a list of the first few repeated applications of \(F_{\gram}\) to the empty vector.
	\begin{align*}
		F_{\gram}(\vec{\emptyset}) & = (\set{\varepsilon}, \set{a}, \emptyset, \set{b}) \\ F_{\gram}^2(\vec{\emptyset}) &= (\set{\varepsilon}, \set{a}, \set{b}, \set{b}) \\ F_{\gram}^3(\vec{\emptyset}) &= (\set{ab,\varepsilon}, \set{a}, \set{b}, \set{b}) \\ F_{\gram}^4(\vec{\emptyset}) &= (\set{ab,\varepsilon}, \set{a}, \set{ab^2,b}, \set{b}) \\ F_{\gram}^5(\vec{\emptyset}) &= (\set{a^2b^2,ab,\varepsilon}, \set{a}, \set{ab^2,b}, \set{b}) \\ F_{\gram}^6(\vec{\emptyset}) &= (\set{a^2b^2,ab,\varepsilon}, \set{a}, \set{a^2b^3,ab^2,b}, \set{b}) \\ F_{\gram}^7(\vec{\emptyset}) &= (\set{a^3b^3, a^2b^2,ab,\varepsilon}, \set{a}, \set{a^2b^3,ab^2,b}, \set{b})\enspace .
	\end{align*}
	It is not hard to see that for all words \(a^nb^n \in L(\gram)\), there exists an \(i\) such that \(a^nb^n\) appears in the first component of \(F_{\gram}^i(\vec{\emptyset})\).
	In fact \(i=2n+1\).
\end{example}
We now formally define the function \(F_{\gram}\) over \(\wp(\Sigma^*)^{n}\), i.e. over \(n\)-vectors of languages of finite words over \(\Sigma\), where \(n\) is the number of variables of our grammar \(\gram\).
Let \(\Lang=(L_1,\ldots,L_n)\in \wp(\Sigma^*)^{n}\) be an \(n\)-vector of languages \(L_1, \dots, L_n\).
For each \(j\in [1,n]\), the \(j\)-th component of \(F_{\gram}(\Lang)\), denoted \(F_{\gram}(\Lang)_j\), is defined as: \[F_{\gram}(\Lang)_j \ud \bigcup_{X_{j}\rightarrow X_{k}\, X_{k'}\in P} L_{k}\, L_{k'} \cup \bigcup_{a \in \Sigma \cup \{\varepsilon\},~X_{j}\rightarrow a\in P}\set{a}\enspace .
\]
We have the following least fixpoint characterization for \(\gram\).
\begin{proposition}[Least fixpoint characterization]
	\label{prop:lfp}
	For all \(i\in\set{1,\ldots,n}\) we have \(L_i(\gram)=(\lfp F_{\gram})_{i}\).
\end{proposition}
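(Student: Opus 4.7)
The plan is to work in the complete lattice $(\wp(\Sigma^*)^n, \subseteq^n)$ with componentwise inclusion, bottom element $\vec{\emptyset}$, and suprema given by componentwise union. First I would observe that $F_{\gram}$ is monotone: each component $F_{\gram}(\Lang)_j$ is a union of terms of the form $L_k L_{k'}$, $\{a\}$, or $\{\varepsilon\}$, and language concatenation and union are monotone in each argument. Hence $\lfp F_{\gram}$ exists by Knaster--Tarski. The equality $L_i(\gram) = (\lfp F_{\gram})_i$ then follows by proving the two inclusions separately, which I sketch below.

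For $(\lfp F_{\gram})_i \subseteq L_i(\gram)$, I would show that the vector $\vec{L}(\gram) \ud (L_1(\gram),\dots,L_n(\gram))$ is itself a fixpoint of $F_{\gram}$, so that minimality gives $\lfp F_{\gram} \subseteq^n \vec{L}(\gram)$ componentwise. Fix $j$. Since $\gram$ is in CNF, every derivation $X_j \Rightarrow^* w$ either begins with a rule $X_j \to X_k X_{k'}$ followed by sub-derivations $X_k \Rightarrow^* w_1$ and $X_{k'} \Rightarrow^* w_2$ with $w = w_1 w_2$, or consists of a single step $X_j \to a$ with $a \in \Sigma \cup \{\varepsilon\}$. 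This yields $L_j(\gram) \subseteq F_{\gram}(\vec{L}(\gram))_j$; the converse inclusion follows by exhibiting, for each word in the right-hand side, a single-rule prefix of a derivation that witnesses membership in $L_j(\gram)$.

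For $L_i(\gram) \subseteq (\lfp F_{\gram})_i$, I would induct on the length $\ell$ of a derivation $X_i \Rightarrow^* w$ to show $w \in F_{\gram}^\ell(\vec{\emptyset})_i$ (hence in $(\lfp F_{\gram})_i$, since Kleene iterates are below the least fixpoint by monotonicity). For $\ell = 1$, $w \in \Sigma \cup \{\varepsilon\}$ comes directly from a rule $X_i \to w$ and sits in $F_{\gram}(\vec{\emptyset})_i$. For $\ell > 1$, the derivation must start with $X_i \to X_k X_{k'}$ and split as $X_k \Rightarrow^* w_1$, $X_{k'} \Rightarrow^* w_2$ in at most $\ell-1$ steps each; by induction $w_1 \in F_{\gram}^{\ell-1}(\vec{\emptyset})_k$ and $w_2 \in F_{\gram}^{\ell-1}(\vec{\emptyset})_{k'}$, so $w = w_1 w_2 \in F_{\gram}^\ell(\vec{\emptyset})_i$ by the definition of $F_{\gram}$.

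I do not anticipate a genuine obstacle here: the argument is a standard verification that the Kleene iterates of the grammar's defining functional recover its language. The only minor care needed is handling the single special rule $X_1 \to \varepsilon$ and the unary terminal rules $X_j \to a$ uniformly in the base case of the induction, which the definition of $F_{\gram}$ already accommodates via the index set $\Sigma \cup \{\varepsilon\}$.
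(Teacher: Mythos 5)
Your proposal is correct and follows essentially the same route as the paper: both work in the complete lattice \((\wp(\Sigma^*)^n,\subseteq^n)\), invoke Knaster--Tarski, and identify the Kleene iterates \(F_{\gram}^{\ell}(\vec{\emptyset})\) with derivations of bounded length. The only difference is that the paper obtains \(\lfp F_{\gram}\) as the supremum of the Kleene iterates via continuity and simply asserts that this supremum is the vector of languages, whereas you spell out that induction and, for the inclusion \(\lfp F_{\gram}\subseteq \vec{L}(\gram)\), substitute the (equally valid, continuity-free) argument that \(\vec{L}(\gram)\) is itself a fixpoint and hence above the least one.
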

\begin{proof}
	The function \(F_{\gram}\) is increasing and the supremum of the increasing sequence of its Kleene iterates starting at the bottom value \((\emptyset, \ldots, \emptyset)\in \wp(\Sigma^*)^{n}\) is the vector equal to \(\set{u\in \Sigma^{*} \mid X_{j} \Rightarrow^{*} u}\) for each \(j \in [1,n]\).
	Therefore, by the Knaster–Tarski theorem applied to the complete lattice \((\wp(\Sigma^*)^{n},\subseteq^{n})\) and \(F_{\gram}\), \(\lfp F_{\gram}=(\set{u\in \Sigma^{*} \mid X_{j} \Rightarrow^{*} u})_{j \in [1,n]}\).
	Thus, \((\lfp F_{\gram})_{i} =L_i(\gram)\).
	\qed
\end{proof}
\begin{remark}
	The function definition uses the fact that \(\gram\) is given in CNF, notably that its rules are of the form \(X_{j}\rightarrow X_{k}X_{k'}\).
	However a function \(F'_{\gram}\) can be defined for a grammar \(\gram\) in any form such that \((\lfp F'_{\gram}) = (L_1(\gram),\ldots,L_n(\gram))\) and if \(\gram\) is in CNF then \(F'_{\gram}\) and \(F_{\gram}\) coincide, see \cite[Definition 2.9.1 and Theorem 2.9.3]{GallierTOC}.
\end{remark}

The ordering \(\mathord{\sqsubseteq_{\ltimes^{n}}}\) is used to compare the Kleene iterates of the function \({F_{\gram}}\).
For it to be apt to detect convergence, hence when the algorithm should stop, the quasiorder \(\ltimes\) needs to be monotonic.
A quasiorder is \emph{monotonic} if it is right-monotonic and left-monotonic.
A quasiorder \(\ltimes\) on \(\Sigma^*\) is \emph{right-monotonic} (respectively \emph{left-monotonic}) if for all \(u,v\in \Sigma^*\), for all \(a\in \Sigma\), \(u\ltimes v\) implies \(ua \ltimes va\) (respectively \(u\ltimes v\) implies \(au \ltimes av\)).

\begin{proposition}[\(\sqsubseteq_{\ltimes}\) monotonicity]
	\label{prop:antichain_opt}
	Let \(\ltimes \) be a monotonic quasiorder on \(\Sigma^{*}\).
	If \(Y \sqsubseteq_{\ltimes^{n}} S\) then \(F_{\gram}(Y) \sqsubseteq_{\ltimes^{n}} F_{\gram}(S) \).
\end{proposition}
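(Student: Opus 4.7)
The plan is to unfold both sides of the implication to the level of individual words in each component, and then reduce everything to a one-line argument that uses only the definition of $F_\gram$, reflexivity, transitivity, and an extension of the hypothesis on monotonicity from single letters to arbitrary words. First I would observe that $Y \sqsubseteq_{\ltimes^n} S$ is exactly the conjunction, over $j \in [1,n]$, of $Y_j \sqsubseteq_\ltimes S_j$, i.e.\ for every $y \in Y_j$ there exists $s \in S_j$ with $s \ltimes y$, and similarly for the conclusion. So it suffices to fix $j$ and a word $w \in F_\gram(Y)_j$, and exhibit some $w' \in F_\gram(S)_j$ with $w' \ltimes w$.

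Before the case analysis, I would record the routine extension of monotonicity from symbols to words: by a straightforward induction on $|x|$ using the single-symbol right- and left-monotonicity assumptions and transitivity, $u \ltimes v$ implies $ux \ltimes vx$ and $xu \ltimes xv$ for every $x \in \Sigma^*$. This is the only nontrivial ingredient, and the obstacle (if any) is just being careful that the iteration is legal; nothing else is substantive.

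Then I would split on the shape of the production of $\gram$ that put $w$ into $F_\gram(Y)_j$. If $w$ comes from a rule $X_j \to a$ with $a \in \Sigma \cup \{\varepsilon\}$, then $w = a$ also lies in $F_\gram(S)_j$ by the definition of $F_\gram$, and $w \ltimes w$ by reflexivity. Otherwise $w$ arises from a rule $X_j \to X_k X_{k'}$, so $w = y\,y'$ with $y \in Y_k$ and $y' \in Y_{k'}$. The hypothesis supplies $s \in S_k$ and $s' \in S_{k'}$ with $s \ltimes y$ and $s' \ltimes y'$. Applying the extended right-monotonicity to $s \ltimes y$ with suffix $s'$ gives $s\,s' \ltimes y\,s'$, and applying extended left-monotonicity to $s' \ltimes y'$ with prefix $y$ gives $y\,s' \ltimes y\,y'$; transitivity yields $s\,s' \ltimes y\,y' = w$. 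Since $s\,s' \in S_k S_{k'} \subseteq F_\gram(S)_j$, this is the desired witness, and the proof concludes by taking the union over $j$ to recover $F_\gram(Y) \sqsubseteq_{\ltimes^n} F_\gram(S)$.
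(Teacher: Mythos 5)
Your proof is correct and follows essentially the same route as the paper's: fix a component $j$, split on whether the word comes from a terminal rule or a rule $X_j \to X_k X_{k'}$, and in the latter case chain $s_k s_{k'} \ltimes y_k s_{k'} \ltimes y_k y_{k'}$ via right- and left-monotonicity plus transitivity. The only difference is that you explicitly justify the extension of monotonicity from single symbols to arbitrary suffixes/prefixes by induction, a step the paper uses implicitly; that is a welcome bit of extra care, not a different argument.
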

\begin{proof}
	Fix \(j \in [1,n]\).
	We want to show that for all \(y \in F_{\gram}(Y)_j\) there exists \(s \in F_{\gram}(S)_j\) such that \(s \ltimes y\).
	Let \(y \in F_{\gram}(Y)_j\).
	If there is an \(a\in \Sigma \cup \{\varepsilon\}\) such that \(X_{j}\rightarrow a\in P\) and \(y=a\), then \(y\) is also in \(F_{\gram}(S)_j\) by definition.
	Since \(\ltimes\) is a quasiorder it is reflexive, so \(y \ltimes y\).
	Otherwise, by definition of \(F_{\gram}(Y)_j\), there exist \(y_k\in Y_k, y_{k'} \in Y_{k'}\) such that \(X_{j}\rightarrow X_{k}X_{k'}\in P\) and \(y=y_k y_{k'}\).
	Since \(Y\sqsubseteq_{\ltimes^n} S\), there exist \(s_k\in S_k, s_{k'} \in S_{k'}\) such that \(s_k\ltimes y_{k}\) and \(s_{k'} \ltimes y_{k'}\).
	Using \(X_{j}\rightarrow X_{k}X_{k'}\in P\) and the definition of \(F_{\gram}(S)_j\), we have \(s_k s_{k'} \in F_{\gram}(S)_j\).
	By right-monotonicity of our quasiorder, \(s_k s_{k'}\ltimes y_k s_{k'}\).
	By left-monotonicity, \(y_k s_{k'}\ltimes y_k y_{k'}\).
	Since quasiorders are transitive, we obtain \(s_k s_{k'}\ltimes y_k y_{k'}\).
	We thus conclude from the above that \(F_{\gram}(S)_j \sqsubseteq_{\ltimes} F_{\gram}(Y)_j\), hence it is easy to see that \(F_{\gram}(S) \sqsubseteq_{\ltimes^n} F_{\gram}(Y)\).
	\qed
\end{proof}

Under the assumption of monotonicity on our \(M\)-preserving well-quasiorder we can iteratively compute a finite set \(S\) such that \(L(\gram)\sqsubseteq_{\ltimes} S\) using the function \({F_{\gram}}\) as shown by the next proposition.
\begin{proposition}[Stabilization]
	\label{prop:cv_check}
	Let \(\ltimes\) be a wqo on \(\Sigma^{*}\).
	There is an \(m\ge 0\) such that \({F_{\gram}}^{m+1}(\vec{\emptyset}) \sqsubseteq_{\ltimes^n} {F_{\gram}}^{m}(\vec{\emptyset})\); and if \(\ltimes\) is monotonic then \(\lfp F_{\gram} \sqsubseteq_{\ltimes^{n}} {F_{\gram}}^{m}(\vec{\emptyset}) \).
\end{proposition}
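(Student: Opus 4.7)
The plan is to derive the first claim purely from the wqo hypothesis, and then bootstrap to the second claim by an induction that invokes Proposition~\ref{prop:antichain_opt}.

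First I would observe that $F_{\gram}$ is monotone with respect to the componentwise inclusion $\subseteq^{n}$, since it is built from unions and concatenations of languages, both of which are monotone. Consequently, starting from $\vec{\emptyset}$ the Kleene iterates form an increasing sequence $F_{\gram}^{0}(\vec{\emptyset}) \subseteq^{n} F_{\gram}^{1}(\vec{\emptyset}) \subseteq^{n} \cdots$, which induces an increasing chain of upward closures $\upw_{\ltimes^{n}} F_{\gram}^{k}(\vec{\emptyset})$. Since $\ltimes$ is a wqo, so is its componentwise lifting $\ltimes^{n}$, and applying the wqo characterisation from the preliminaries to this chain yields an $m\ge 0$ with $\upw_{\ltimes^{n}} F_{\gram}^{m+1}(\vec{\emptyset}) \subseteq \upw_{\ltimes^{n}} F_{\gram}^{m}(\vec{\emptyset})$. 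By the equivalence $X \sqsubseteq_{\ltimes} Y \iff \upw_{\ltimes} X \subseteq \upw_{\ltimes} Y$ recorded in the preliminaries, this is precisely $F_{\gram}^{m+1}(\vec{\emptyset}) \sqsubseteq_{\ltimes^{n}} F_{\gram}^{m}(\vec{\emptyset})$.

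For the second claim, assuming in addition that $\ltimes$ is monotonic, I would prove by induction on $j \ge 0$ that $F_{\gram}^{m+j}(\vec{\emptyset}) \sqsubseteq_{\ltimes^{n}} F_{\gram}^{m}(\vec{\emptyset})$. The base cases $j=0$ (reflexivity) and $j=1$ (the first claim) are immediate. For the inductive step, applying Proposition~\ref{prop:antichain_opt} to the hypothesis gives $F_{\gram}^{m+j+1}(\vec{\emptyset}) \sqsubseteq_{\ltimes^{n}} F_{\gram}^{m+1}(\vec{\emptyset})$, which composed with the first claim via transitivity of $\sqsubseteq_{\ltimes^{n}}$ yields the desired inequality. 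Since $F_{\gram}^{k}(\vec{\emptyset}) \subseteq^{n} F_{\gram}^{m}(\vec{\emptyset})$ trivially for $k \le m$, every Kleene iterate is $\sqsubseteq_{\ltimes^{n}}$-below $F_{\gram}^{m}(\vec{\emptyset})$, i.e.\ $\bigcup_{k} \upw_{\ltimes^{n}} F_{\gram}^{k}(\vec{\emptyset}) \subseteq \upw_{\ltimes^{n}} F_{\gram}^{m}(\vec{\emptyset})$. By Proposition~\ref{prop:lfp}, $\lfp F_{\gram}$ is the componentwise union of these iterates, and the above yields $\lfp F_{\gram} \sqsubseteq_{\ltimes^{n}} F_{\gram}^{m}(\vec{\emptyset})$.

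The main difficulty will be purely organisational: isolating where monotonicity of $\ltimes$ is actually needed (only inside Proposition~\ref{prop:antichain_opt}, hence only in the inductive step), and translating cleanly between the set ordering $\sqsubseteq_{\ltimes^{n}}$ and inclusion of upward closures, together with correctly transporting the wqo property to the componentwise lifting. Beyond this book-keeping, no deeper obstacle is anticipated since all heavy lifting is already delivered by Propositions~\ref{prop:lfp} and~\ref{prop:antichain_opt}.
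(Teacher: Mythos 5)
Your proof is correct and follows essentially the same route as the paper's: the wqo/ascending-chain argument on upward closures for the existence of \(m\), then an induction through Proposition~\ref{prop:antichain_opt} combined with transitivity of \(\sqsubseteq_{\ltimes^{n}}\), and finally the identification of \(\lfp F_{\gram}\) with the union of the Kleene iterates. If anything, your derivation of the increasing chain \(F_{\gram}^{i}(\vec{\emptyset}) \sqsubseteq_{\ltimes^{n}} F_{\gram}^{i+1}(\vec{\emptyset})\) directly from \(\subseteq^{n}\)-monotonicity of \(F_{\gram}\) is slightly cleaner than the paper's appeal to Proposition~\ref{prop:antichain_opt} at that point, since the first claim does not assume \(\ltimes\) monotonic.
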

\begin{proof}
	For simplicity, we write this proof for the case where \(n\), the number of variables in \(\gram\), is equal to \(1\).
	It holds also for \(n> 1\) by reasoning component-wise.
	Since \(\ltimes\) is a wqo on \(\Sigma^{*}\), the set of upward closed sets of \(\wp(\Sigma^{*})\) ordered by inclusion has the ascending chain condition \cite[Theorem 1.1]{LucaV94}, meaning there is no infinite strictly increasing sequence of upward closed sets.
	It is routine to check by induction using Proposition~\ref{prop:antichain_opt} that \({F_{\gram}}^{i}(\emptyset) \sqsubseteq_{\ltimes} {F_{\gram}}^{i+1}(\emptyset)\) for all \(i\ge 0\).
	Therefore also \(\upw_\ltimes {F_{\gram}}^{i}(\emptyset) \subseteq \upw_\ltimes{F_{\gram}}^{i+1}(\emptyset)\) for all \(i\ge 0\).
	By the ascending chain condition, there exists a positive integer \(m\) such that \(\upw_\ltimes {F_{\gram}}^{m+1}(\emptyset) \subseteq \upw_\ltimes {F_{\gram}}^{m}(\emptyset)\), which is equivalent to \({F_{\gram}}^{m+1}(\emptyset) \sqsubseteq_{\ltimes} {F_{\gram}}^{m}(\emptyset)\).

	Assume that \(\ltimes\) is monotonic.
	An induction using Proposition~\ref{prop:antichain_opt} and the above shows that for every \(k \geq m\), \({F_{\gram}}^{k+1}(\emptyset) \sqsubseteq_{\ltimes} {F_{\gram}}^{k}(\emptyset)\).
	Hence, by transitivity of \(\sqsubseteq_{\ltimes}\) we deduce that for every \(k \geq m\), \({F_{\gram}}^{k}(\emptyset)\sqsubseteq_{\ltimes}{F_{\gram}}^{m}(\emptyset)\).
	Recall that \(\lfp F_{\gram}\) is the supremum of the sequence of Kleene iterates of \(F_{\gram}\), i.e. of \(\{ F_{\gram}^{n} (\emptyset) \}_{n\in \mathbb{N}}\).
	In the complete lattice \((\wp(\Sigma^*)^{n},\subseteq^{n})\), the supremum of \(\{ F_{\gram}^{n} (\emptyset) \}_{n\in \mathbb{N}}\) is equal to \(\bigcup_{n\in \mathbb{N}} F_{\gram}^{n} (\emptyset)\).
	Thus \(\lfp F_{\gram}\sqsubseteq_{\ltimes} {F_{\gram}}^{m}(\emptyset)\).
	\qed
\end{proof}

\paragraph{Algorithm.}
Given a regular language \(M\subseteq \Sigma^{*}\) we say that a quasiorder \(\mathord{\ltimes}\subseteq \mathord{ \Sigma^{*}\times \Sigma^{*}}\) is \emph{\(M\)-suitable} if it is 1) a wqo, 2) \(M\)-preserving, 3) monotonic and 4) decidable i.e., given two words \(u\) and \(v\) we can decide whether \(u\ltimes v\).
Intuitively a quasiorder is \(M\)-suitable if it can be used in our quasiorder-based framework to decide the inclusion problem \(L(\gram) \subseteq M\).
\begin{algorithm}[hbt]
	\label{algo:1}
	\KwData{\(\gram=(V,P)\) CFG with \(n\) variables.}%
	\KwData{\(M\)-suitable quasiorder \(\ltimes\).}
	\KwData{Procedure deciding  \(u \in M\) given \(u \).}
	\smallskip
	\(\text{cur} \coloneq \vec{\emptyset}\)\;
	\Repeat{\(\mathrm{cur} \sqsubseteq_{\ltimes^{n}} \mathrm{prev}\)}{
		\(\text{prev} \coloneq \text{cur}\)\;
		\(\text{cur} \coloneq {F_{\gram}}(\text{prev})\)\;
	}
	\ForEach{\(u\in (\mathrm{prev})_{1}\)}{
		\lIf{\( u \notin M \)}{\Return{\(\mathsf{false}\)}}
	}
	\Return{\(\mathsf{true}\)};
	\caption{Algorithm for deciding \(L(\gram)\subseteq M\)
		\label{algo}}
\end{algorithm}
\begin{theorem}
	Algorithm~\ref{algo} decides the inclusion problem \(L(\gram)\subseteq M\).
\end{theorem}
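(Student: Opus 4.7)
The plan is to split the argument into termination (of the \textbf{repeat} loop together with the final \textbf{foreach}) and correctness of the returned Boolean. For termination I would first record the obvious loop invariant: after $k \ge 1$ iterations of the \textbf{repeat} block, $\mathrm{cur}$ equals $F_{\gram}^{k}(\vec{\emptyset})$ and $\mathrm{prev}$ equals $F_{\gram}^{k-1}(\vec{\emptyset})$. The exit condition $\mathrm{cur} \sqsubseteq_{\ltimes^n} \mathrm{prev}$ then coincides with $F_{\gram}^{k}(\vec{\emptyset}) \sqsubseteq_{\ltimes^n} F_{\gram}^{k-1}(\vec{\emptyset})$, which the first half of Proposition~\ref{prop:cv_check} guarantees to hold for some $k$ because $\ltimes$ is a wqo. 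Each test is effective since $\ltimes$ is decidable and each $F_{\gram}^{k}(\vec{\emptyset})$ is a vector of finite sets (a routine induction on $k$ from the definition of $F_{\gram}$), so the \textbf{foreach} block is also finite and effective via the given membership procedure for $M$.

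For correctness, let $m$ be the value of $k$ at loop exit and set $S \ud (\mathrm{prev})_1 = (F_{\gram}^{m}(\vec{\emptyset}))_1$. I would establish the two bracketing facts $S \subseteq L(\gram)$ and $L(\gram) \sqsubseteq_{\ltimes} S$ separately. The first follows by a short induction showing $F_{\gram}^{k}(\vec{\emptyset}) \subseteq^n \lfp F_{\gram}$ for every $k$, combined with Proposition~\ref{prop:lfp}. The second is the first-component projection of $\lfp F_{\gram} \sqsubseteq_{\ltimes^n} F_{\gram}^{m}(\vec{\emptyset})$, which the second half of Proposition~\ref{prop:cv_check} provides because the exit condition has just been verified and $\ltimes$ is monotonic.

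With these two facts at hand, correctness reduces to equation~\eqref{eq:reduction} instantiated at $S$, and this is exactly the $M$-preservation argument already laid out in the text preceding the stabilization proposition: if $L(\gram) \subseteq M$, every $s \in S \subseteq L(\gram)$ lies in $M$; conversely, if every $s \in S$ lies in $M$, any $u \in L(\gram)$ admits some $s \in S$ with $s \ltimes u$ by $L(\gram) \sqsubseteq_{\ltimes} S$, and $M$-preservation propagates $s \in M$ to $u \in M$. Since the algorithm returns \textsf{true} exactly when every $u \in S$ belongs to $M$, it returns the correct answer.

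The real work is already carried by Propositions~\ref{prop:lfp}, \ref{prop:antichain_opt}, and \ref{prop:cv_check}, so the main obstacle is almost purely bookkeeping: maintaining the loop invariant and being careful about the passage between $\sqsubseteq_{\ltimes^n}$ on the full vector and its restriction $\sqsubseteq_{\ltimes}$ on the first component. Once that is in place, the theorem is essentially a one-line consequence of combining the two stabilization inclusions with $M$-preservation.
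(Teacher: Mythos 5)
Your proposal is correct and follows essentially the same route as the paper: both rest on Proposition~\ref{prop:cv_check} for termination of the repeat loop and for $\lfp F_{\gram} \sqsubseteq_{\ltimes^n} F_{\gram}^{m}(\vec{\emptyset})$, pair it with $F_{\gram}^{m}(\vec{\emptyset}) \subseteq \lfp F_{\gram}$ for the other direction, and then conclude via $M$-preservation and equation~\eqref{eq:reduction}. Your version merely makes the loop invariant and the finiteness/effectivity of the iterates more explicit, which the paper leaves implicit.
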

\begin{proof}
	As established by Proposition~\ref{prop:cv_check}, given a monotonic decidable wqo \(\ltimes\), lines \(2\) to \(5\) of Algorithm~\ref{algo} compute, in finite time, a finite set equal to \({F_{\gram}}^{m}(\vec{\emptyset})\) for some \(m\) such that \(\lfp F_{\gram}\sqsubseteq_{\ltimes^n}{F_{\gram}}^{m}(\vec{\emptyset})\).
	Moreover, since \({F_{\gram}}^{m}(\vec{\emptyset})\subseteq \lfp F_{\gram}\), we find that \({F_{\gram}}^{m}(\vec{\emptyset})\sqsubseteq_{\ltimes}\lfp F_{\gram}\).
	Since \(\ltimes\) is \(M\)-preserving, Equation~\eqref{eq:reduction} holds for \({F_{\gram}}^{m}(\vec{\emptyset})_1\), i.e.
	\begin{align*}
		\label{eq:algo}
		L(\gram) \subseteq M & \iff \forall u\in {F_{\gram}}^{m}(\vec{\emptyset})_1,~u\in M \enspace .
	\end{align*}
	Thus the algorithm is correct, and it terminates because there are only finitely many membership checks \(u\in M\) in the for-loop at lines \(6\) and \(7\).
	\qed
\end{proof}
\begin{remark}
	The loop at lines 2 to 5 of the algorithm iteratively computes a vector of sets of words \(\text{cur}\), updating its value to \(F_{\gram}(\text{prev})\) at line 4.
	The algorithm remains correct if we remove some words from \(\text{prev}\) before each new update, as long as the resulting vector \(\text{prev}\) is included in \(\text{cur}\) with respect to \(\sqsubseteq_{\ltimes}\) (intuitively this ensures we do not lose any counter-examples).
	That is, we can replace the assignment line 3 by \(\text{prev} \coloneq \text{cur'}\) for any \(\text{cur'} \subseteq \text{cur}\) such that \(\text{cur} \sqsubseteq_{\ltimes} \text{cur'}\); the correctness follows from Proposition~\ref{prop:antichain_opt}.
\end{remark}

\section{Quasiorder Instantiation}
\label{sec:instantiation}

We instantiate the algorithm with a quasiorder and give an example run.
We then discuss other quasiorders with which the algorithm can be instantiated.

\subsection{A State-based Quasiorder}
\label{sec:example}

We present an \(M\)-suitable quasiorder to instantiate Algorithm~\ref{algo}.
The quasiorder is derived from a finite automaton with language \(M\).
Given an automaton \(\auto=(Q,q_{I},\delta,F)\) with \(L(\auto)=M\) and a word \(u\in \Sigma^*\), we define the \emph{context} set of \(u\) \[\text{ctx}^{\auto}(u)\ud\{(q,q')\in Q^2\mid q\xrightarrow {u}\!
	\!^*q'  \}\enspace .\]
We derive the following quasiorder on \(\Sigma^*\): \[u\leq_{\text{ctx}}^{\auto} v \udr \text{ctx}^{\auto}(u)\subseteq \text{ctx}^{\auto}(v)\enspace .
\]
\begin{proposition}
	The quasiorder \(\leq_{\text{ctx}}^{\auto}\) is \(M\)-suitable.
\end{proposition}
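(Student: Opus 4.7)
The plan is to verify in turn the four conditions that define $M$-suitability. Fix an FA $\auto = (Q, q_I, \delta, F)$ with $L^*(\auto) = M$. A key observation used throughout is the semigroup identity $\text{ctx}^{\auto}(uw) = \text{ctx}^{\auto}(u) \circ \text{ctx}^{\auto}(w)$, where $\circ$ denotes composition of binary relations; this is a direct consequence of the fact that a run on $uw$ splits through a state reached after reading $u$.

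For the wqo property, I would note that every context set lies in the finite powerset $\wp(Q^2)$. Since $\leq_{\text{ctx}}^{\auto}$ is the pullback of $\subseteq$ on $\wp(Q^2)$ along $u \mapsto \text{ctx}^{\auto}(u)$, only finitely many distinct upward closures w.r.t.\ $\leq_{\text{ctx}}^{\auto}$ can arise among subsets of $\Sigma^*$, hence no infinite strictly ascending chain of upward-closed sets exists. For $M$-preservation, the key remark is that $u \in M$ iff $(q_I, q) \in \text{ctx}^{\auto}(u)$ for some $q \in F$; inclusion of context sets then transports this witness to any $v$ with $u \leq_{\text{ctx}}^{\auto} v$. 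For decidability, $\text{ctx}^{\auto}(u)$ can be effectively computed by simulating $\auto$ from every state on $u$, and inclusion between two finite subsets of $Q^2$ is trivially decidable.

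The remaining property, monotonicity, is the main technical step but is still routine given the composition identity. Assuming $u \leq_{\text{ctx}}^{\auto} v$, for right monotonicity I would take $(q, q') \in \text{ctx}^{\auto}(ua)$, split it through an intermediate $q''$ with $(q, q'') \in \text{ctx}^{\auto}(u)$ and $q'' \xrightarrow{a} q'$, apply the hypothesis to conclude $(q, q'') \in \text{ctx}^{\auto}(v)$, and recompose to obtain $(q, q') \in \text{ctx}^{\auto}(va)$; left monotonicity is symmetric. I do not expect any genuine obstacle: everything reduces to finiteness of $Q$ and to the ability to split runs at symbol boundaries.
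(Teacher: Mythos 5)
Your proof is correct, and it takes the standard route: the paper's own proof simply cites \cite[Lemma 7.8]{GantyRV2021} for the wqo, \(M\)-preservation and monotonicity properties and only argues decidability directly, so your argument is essentially the self-contained version of what that citation hides. All four verifications check out — in particular the finiteness of \(\wp(Q^2)\) for the wqo property, the characterization \(u\in M \iff \text{ctx}^{\auto}(u)\cap(\{q_I\}\times F)\neq\emptyset\) for preservation, and the relational-composition identity for monotonicity are exactly the right ingredients.
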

\begin{proof}
	The result \cite[Lemma 7.8]{GantyRV2021} shows that \(\leq_{\text{ctx}}^{\auto}\) is a \(M\)-preserving, monotonic wqo.
	Since for every \(u \in \Sigma^{*}\) we can compute the finite set \(\text{ctx}^{\auto}(u)\), we deduce that given two words \(u\) and \(v\) we can decide \(u\leq_{\text{ctx}}^{\auto}v\).
	Thus, \(\leq_{\text{ctx}}^{\auto}\) is \(M\)-suitable.
	\qed
\end{proof}
\begin{figure}[ht]
	\centering
	\begin{tikzpicture}[node distance=2cm, auto, initial text={}, transform shape]%
		\node[state, accepting, initial] (q0) {\(p\)};
		\node[state, accepting, right of=q0] (q1) {\(q\)};

		\path[->]
		(q0) edge[above=20] node {\(b\)} (q1)
		(q0) edge[loop above] node {\(a\)} ()
		(q1) edge[loop above] node {\(b\)} ()
		;
	\end{tikzpicture}
	\caption{Finite automaton \(\auto\) recognizing language \(a^*b^*\).}
	\label{fig:automaton-for-M}
\end{figure}
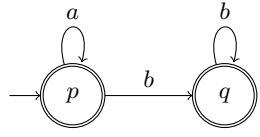
Let \(M\) be the regular language \(a^*b^*\).
Consider the automaton \(\auto\) depicted in \figurename~\ref{fig:automaton-for-M}.
It has two states \(p,q\) and accepts \(M\), i.e. \(L(\auto)=M\).
We give an execution of our algorithm instantiated with \(\leq_{\text{ctx}}^{\auto}\) to decide \(L\subseteq M\), where \(L\) is the language recognized by the grammar \(\gram\) of Example~\ref{ex:cfg}, that is \(L=L(\gram)=\set{a^n b^n \mid n \ge 0}\).
We compute the contexts of the words appearing in the first iterations of \(F_\gram\), as computed in Example~\ref{ex:iterations}.
\begin{align*}
	\text{ctx}^{\auto}(\varepsilon) & = \set{(p,p),(q,q)} & \text{ctx}^{\auto}(b) & = \set{(p,q),(q,q)}                                    \\
	\text{ctx}^{\auto}(a)           & = \set{(p,p)}       & \text{ctx}^{\auto}(w) & = \set{(p,q)},      &  & \forall w\in a^+b^+\enspace .
\end{align*}
The algorithm computes the first four iterations of \(F_{\gram}\) and then stops.
We have \(F_{\gram}^5(\emptyset, \emptyset, \emptyset, \emptyset)_1= \set{a^2b^2,ab,\varepsilon}\), \(F_{\gram}^4(\emptyset, \emptyset, \emptyset, \emptyset)_1= \set{ab,\varepsilon}\), and the languages of the other component are the same.
Since \(ab \leq_{\text{ctx}}^{\auto} a^2b^2\), we have that \(F_{\gram}^5(\emptyset, \emptyset, \emptyset, \emptyset)\sqsubseteq_{(\leq_{\text{ctx}}^{\auto})^4} F_{\gram}^4(\emptyset, \emptyset, \emptyset, \emptyset)\), hence \(\lfp F_{\gram}\sqsubseteq_{(\leq_{\text{ctx}}^{\auto})^4} F_{\gram}^4(\emptyset, \emptyset, \emptyset, \emptyset)\).
For each \(u \in F_{\gram}^4(\emptyset, \emptyset, \emptyset, \emptyset)_1\) we check if \(u \in a^*b^*\).
This is the case, and the algorithm returns \(\mathsf{true}\).
\subsection{Other Quasiorders}
\label{sec:other-qo}

The first kind of quasiorder we presented was a \emph{state-based} quasiorder derived from a finite automaton for the language \(M\).
Here we present a \emph{syntactic} quasiorder based on the syntactic structure of the language \(M\).
Given a word \(u\in \Sigma^*\), we define the set \[\text{ctx}^{M}(u)\ud\{(w,w')\in \Sigma^*\times \Sigma^* \mid wuw' \in M \}\enspace.
\]
We derive the following quasiorder on \(\Sigma^*\):\\ \[u\leq_{\text{ctx}}^{M} v \udr \text{ctx}^{M}(u)\subseteq \text{ctx}^{M}(v)\enspace.
\]
Following \cite{LucaV94}, we call this the Myhill order.
\begin{proposition}
	The quasiorder \(\leq_{\text{ctx}}^{M}\) is \(M\)-suitable.
	Moreover, it is the coarsest among the \(M\)-suitable quasiorders.
\end{proposition}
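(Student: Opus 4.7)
The plan is to verify each of the four conditions for $M$-suitability separately, and then prove coarsest-ness by a direct comparison argument using the monotonicity and $M$-preservation of any competing quasiorder.

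For $M$-preservation, I would observe that $u \in M$ is equivalent to $(\varepsilon,\varepsilon) \in \text{ctx}^{M}(u)$, so $u \in M$ and $u \leq_{\text{ctx}}^{M} v$ force $(\varepsilon,\varepsilon) \in \text{ctx}^{M}(v)$, i.e.\ $v \in M$. For monotonicity, I would compute contexts directly: for any $a \in \Sigma$, $(w,w') \in \text{ctx}^{M}(ua)$ iff $(w, aw') \in \text{ctx}^{M}(u)$, and similarly on the left; hence $\text{ctx}^{M}(u) \subseteq \text{ctx}^{M}(v)$ yields $\text{ctx}^{M}(ua) \subseteq \text{ctx}^{M}(va)$ and $\text{ctx}^{M}(au) \subseteq \text{ctx}^{M}(av)$. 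For decidability, I would note that since $M$ is regular, its syntactic congruence has finite index; equivalently, $\text{ctx}^{M}(u)$ depends only on the image of $u$ in the finite syntactic monoid of $M$, which is computable from any FA for $M$, so one can compare $\text{ctx}^{M}(u)$ and $\text{ctx}^{M}(v)$ by representing them through this finite data.

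The main obstacle is the wqo property, but it follows from the same finite-index observation: there are only finitely many distinct sets $\text{ctx}^{M}(u)$ as $u$ ranges over $\Sigma^{*}$, hence the partial order $\subseteq$ on the range of $\text{ctx}^{M}$ induces only finitely many equivalence classes on $\Sigma^{*}$, so no infinite strictly ascending chain of $\leq_{\text{ctx}}^{M}$-upward-closed sets can exist. If a more self-contained argument is preferred, I would invoke the standard fact that the two-sided Myhill congruence $\{(u,v) \mid \text{ctx}^{M}(u) = \text{ctx}^{M}(v)\}$ has finite index exactly when $M$ is regular.

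For the coarsest-ness claim, I would fix any $M$-suitable quasiorder $\ltimes$ and show $u \ltimes v \implies u \leq_{\text{ctx}}^{M} v$. Pick $(w,w') \in \text{ctx}^{M}(u)$, so $wuw' \in M$. Applying left-monotonicity of $\ltimes$ repeatedly to prepend $w$ and right-monotonicity to append $w'$ yields $wuw' \ltimes wvw'$. Since $\ltimes$ is $M$-preserving and $wuw' \in M$, we conclude $wvw' \in M$, i.e.\ $(w,w') \in \text{ctx}^{M}(v)$. Hence $\text{ctx}^{M}(u) \subseteq \text{ctx}^{M}(v)$, proving that $\ltimes$ refines $\leq_{\text{ctx}}^{M}$ and so $\leq_{\text{ctx}}^{M}$ is the coarsest.
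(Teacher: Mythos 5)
Your proof is correct, and it is genuinely more self-contained than the paper's: the paper disposes of the entire proposition in two sentences by citing Lemma~7.6(a) and (b) of an earlier work (\cite{GantyRV2021}), whereas you unfold the actual arguments. Each of your four suitability checks is sound --- the observation that \(u\in M\) iff \((\varepsilon,\varepsilon)\in\text{ctx}^{M}(u)\) gives \(M\)-preservation; the identity \((w,w')\in\text{ctx}^{M}(ua)\iff(w,aw')\in\text{ctx}^{M}(u)\) (and its left counterpart) gives monotonicity; and finiteness of the syntactic monoid gives both decidability and the wqo property, the latter because a quasiorder with finitely many equivalence classes admits only finitely many upward-closed sets and hence no infinite strictly increasing chain of them. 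Your coarsest-ness argument (prepend \(w\) letter by letter via left-monotonicity, append \(w'\) via right-monotonicity, then apply \(M\)-preservation) is precisely the content one would expect behind the cited Lemma~7.6(b). What the paper's route buys is brevity and a single point of reference for all the state-based and syntactic quasiorders at once; what yours buys is a proof readable without chasing the reference, and incidentally it makes explicit \emph{why} monotonicity and \(M\)-preservation are exactly the properties forcing any suitable quasiorder to refine the Myhill order. One cosmetic remark: the paper's proof contains a typo (it writes \(\leq_{\text{ctx}}^{\auto}\) where \(\leq_{\text{ctx}}^{M}\) is meant), which your version avoids by construction.
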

\begin{proof}
	Since \(M\) is a regular language, by Lemma 7.6 (a) in \cite{GantyRV2021} \(\leq_{\text{ctx}}^{\auto}\) is \(M\)-suitable.
	By Lemma 7.6 (b) if \(\ltimes \) is a \(M\)-suitable quasiorder then for every \(u\in \Sigma^{*}\) we have \(\upw_\ltimes \{u \} \subseteq \upw_{\leq_{\text{ctx}}^{M}} \{u\}\), in other words \(u\ltimes v\) implies \(u\leq_{\text{ctx}}^{M} v\).
	Hence, \(\leq_{\text{ctx}}^{M}\) is coarser than any \(M\)-suitable quasiorder.
	\qed
\end{proof}

Notice that given two words \(u\) and \(v\) deciding \(u\leq_{\text{ctx}}^{M} v\) reduces to an inclusion between two regular languages, it is therefore PSPACE-complete to decide it.
On the other hand deciding \(u\leq_{\text{ctx}}^{\auto}v\) is PTIME.

\section{Algorithmic Aspects}
\label{sec:aspects}

In Section~\ref{sec:example} we considered the so-called state-based quasiorders for Algorithm~\ref{algo:1}.
We can leverage state-based quasiorders even further and modify the algorithm to drop words entirely and replace them by their context.
As we will see, contexts (that is sets of pairs of states) carry sufficient information to perform all the tasks required by the algorithm (namely comparison between words, updates via word concatenation and word membership tests).

\subsection{A State-Based Variant}
As a first step, consider a variant of Algorithm~\ref{algo:1} where words are stored together with their context: instead of \(w\) we will store \((w,\text{ctx}^{\auto}(w))\).
This change preserves the logic of the algorithm, hence its correctness.
Observe that because the contexts are readily available they can be used directly for comparisons: instead of computing contexts each time two words need to be compared, the algorithm compares the already-computed contexts.
Furthermore, and this is key, contexts can be updated during computations because they can be characterized inductively.
Indeed, given \(\text{ctx}^{\auto}(w)\) and \(\text{ctx}^{\auto}(w')\), \(\text{ctx}^{\auto}(w w')\) can be computed as follows: \[ \text{ctx}^{\auto}(w w') = \{ (p,q) \mid \exists p'\colon (p,p') \in \text{ctx}^{\auto}(w) \land (p',q)\in \text{ctx}^{\auto}(w') \}\enspace.
\]
In addition, membership queries in \(M=L(\auto)\) have an equivalent counterpart on contexts:
\(w \in M\) if{}f \(\text{ctx}^\auto(w) \cap (\{q_I\}\times F) \neq\emptyset\).
As a consequence of the above, Algorithm~\ref{algo:1} can drop words entirely and focus exclusively on contexts.
We call this algorithm the antichains algorithm following a prolific history of such algorithms like \cite{holikAntichainsVerificationRecursive2015} (for the CFG into REG case) and starting with \cite{dewulfAntichainsNewAlgorithm2006} (for the REG into REG case).

Next, we will see a data-structure for contexts tailored to the case where \(L(\gram)\) is a singleton.
This data-structure, which leverages the automaton structure of regular languages, has been studied by Esparza, Rossmanith and Schwoon in a paper from 2000 \cite{esparzaUniformFrameworkProblems2000}.
Their paper list several potential use of the data-structure but not the one we are giving next.

\subsection{A Data-Structure for the Case of Straight Line Programs}
\label{sec:data-structure}

In 2000 Esparza, Rossmanith and Schwoon published a paper \cite{esparzaUniformFrameworkProblems2000} in the EATCS bulletin where they give an algorithm which allows to solve a number of elementary problems on context-free grammars including identifying useless variables, and deciding emptiness or finiteness.
The algorithm is a “saturation” algorithm that takes as input a finite state automaton \(\auto=(Q,\delta,q_I,F)\) and a context-free grammar \(\gram=(V,P)\) on an alphabet \(\Sigma\).
It adds transitions to \(\auto\) so that, upon saturation, its resulting language corresponds to the following set of words on alphabet \((V\cup \Sigma)\): \[ \{ \alpha \in (V\cup \Sigma)^{*} \mid \exists w \in L(\auto) \colon \alpha \Rightarrow^* w \}\enspace.
\]
This set, which they denote as \(\text{pre}^*(L(\auto))\), comprises the sentences of the grammar \(\gram\) which, upon application of zero or more production rules, becomes a word in \(L(\auto)\).
At the logical level, the algorithm consists of one saturation rule and nothing more which states that if \(X\rightarrow \beta\in P\) and \(q \xrightarrow{\beta}\!
\!^* q'\) then add a transition \( q \stackrel{X}{\rightarrow} q'\) to \(\auto\).

We show that this approach is relevant to our problem.
Let us consider the inclusion problem CFG into REG where we have the restriction that for every \(i\in[1,n]\) we have that \( L(\gram)_i \) is either a singleton word or it is empty: \(L(\gram)_i\neq\emptyset \implies L(\gram)_i\in\Sigma^*\).
Such grammars are called straight line programs (SLP) in the literature.
As we have shown in the past \cite{gantyRegularExpressionSearch2019}, the inclusion problem for straight line programs has direct applications in regular expression matching for text compressed using grammar-based techniques.
Intuitively, the regular expression is translated into \(L(\auto)\) and the compressed text is given by \(L(\gram)\).
Then we have that the regular expression has a match in the decompressed text if{}f \(L(\gram) \subseteq L(\auto)\).

In the context of our inclusion algorithm, the assumption that \(\gram\) is a straight line program ensures that, for every \(m\), each entry in the vector \(F_\mathcal{\gram}^m(\vec{\emptyset})\) is either the empty set or a singleton word.
Now consider the state-based variant of Algorithm~1 discussed above where words are replaced by their contexts.
Because of the SLP assumption we have that each entry of the \(\mathrm{cur}\) and \(\mathrm{prev}\) vectors in Algorithm~\ref{algo} stores a set of pairs of states: each entry is given by a set \(P \subseteq Q^2\) as opposed to \(P \subseteq \wp(Q^2)\) when \(\gram\) need not be a SLP.

A possible data-structure to store such a set \(P\) is a graph where nodes are given by \(Q\) and whose edges coincides with the pairs of states of \(P\).
Add labels on edges and you can encode a vector of sets like \(P\), namely an edge from \(q\) to \(q'\) with label \(m\) encodes that the pair \( (q,q') \) belongs to the \(m\)th component of the vector.
Next, observe that the updates to be carried out on the vectors \(\mathrm{cur}\) and \(\mathrm{prev}\) via \(F_{\gram}\) as prescribed by Algorithm~\ref{algo} can be implemented via the saturation rule described above.
For instance, assume \(F_{\gram}\) requires to update the \(m\)-th entry of \(\mathrm{cur}\) following the production rule \( X_m \rightarrow X_k\, X_{k'} \).
The corresponding update to the graph-based data structure adds an edge labelled \(m\) between states \( (q,q') \) if there exists a node \(q''\) such that \( (q,q'') \) and \( (q'',q') \) are two edges respectively labelled by \(k\) and \(k'\).
It is worth pointing that this update rule coincides with the above saturation rule for the rule \(X_m \rightarrow X_k\, X_{k'}\).
Therefore the saturation rule together with the underlying graph data-structure leveraging the automaton \(\auto\) enable a state-based implementation of Algorithm~\ref{algo} tailored to the case SLP into REG.
It is worth pointing out that Esparza et al.~\cite{esparzaUniformFrameworkProblems2000} work out the precise time and space complexity of implementing the saturation rule.
Ultimately, we obtain an algorithm in PTIME for the SLP into REG inclusion problem.
A more detailed complexity analysis can be found in Valero's PhD thesis \cite[Chapter~5]{ValeroMeja}.

\section{Other Language Inclusion Problems}
\label{sec:ouverture}

In this section we talk about the other language classes to which our framework can be applied.

\subsection{REG \(\subseteq\) REG}
We consider the case where \(\gram\) is a right regular grammar i.e. a grammar in which all production rules are of the form \(X_{j} \rightarrow X_{k}\, a\), \(X_{j} \rightarrow a\), or \(X_{j}\rightarrow \varepsilon\) where \(X_{j}\) and \(X_{k}\) are variables and \(a\in \Sigma\).
The function \(F_{\gram}\) of the least fixpoint characterization of \(L(\gram)\) is now defined as follows.
Let \(\Lang=(L_1,\ldots,L_n)\in \wp(\Sigma^*)^{n}\) define:
\[F_{\gram}(\Lang)_j \ud \bigcup_{X_{j}\rightarrow X_{k}\,a\in P}
	L_{k}\{a\} \cup \bigcup_{a \in \Sigma \cup \{\varepsilon\},~X_{j}\rightarrow a\in P}\set{a}\enspace .
\]
The same algorithm with this new \(F_{\gram}\) is also correct.
Notice however that for Proposition~\ref{prop:antichain_opt} and Proposition~\ref{prop:cv_check} to be correct it is now sufficient for \(\ltimes\) to only be right-monotonic.
This is because the new \(F_{\gram}\) only uses right-concatenations in its production rules.
As a consequence of this, we can relax \(\leq_{\text{ctx}}^{\auto}\) into the coarser quasiorder \(\leq_{\text{post}}^{\auto}\) where, we just replace the context \(\text{ctx}\) of a word \(u\) by the set \(\text{post}\) containing all the states reachable with the word \(u\) from the initial state of \(\auto\), that is, \(\set{q\mid q_I\xrightarrow{u}\!
	\!^* q'}\).
Similarly, we can relax \(\leq_{\text{ctx}}^{M}\) into the Nerode quasiorder of \(M\).
In the Nerode quasiorder, a word \(u\) subsumes \(v\) if, for every \(w\in \Sigma^{*}\), \(uw\in M\) implies \(vw\in M\).

\subsection{REG \(\subseteq\) Petri Net Traces}
In 1999 Petr Jančar, Javier Esparza and Faron Moller \cite{jancarPetriNetsRegular1999} published a paper which solved several language inclusion problems including the problem asking whether the trace set of a finite-state labelled transition system is included in the trace set of a Petri net (which defines a labelled transition system with possibly infinitely many states).
In this section we give a decision procedure following the framework we put in place in the previous sections.
As we will see our decision procedure is very close yet slightly different from the one they proposed.

It is worth pointing that their exposition favors a process theory viewpoint rather than an automata theory viewpoint which is why they talk about trace sets and labelled transition systems (or processes) instead of languages and acceptors.
With this view in mind, the notion of alphabet \(\Sigma\) becomes a set of \emph{actions} in which they include a distinguished \emph{silent} action \(\tau\).
The \(\tau\) resembles the \(\varepsilon\) symbol used in the automata theoretic setting.
In their paper, they define and prove correct an algorithm to decide whether the trace set given by a finite-state labelled transition system is included into the trace set given by a Petri net.

Our first step is to extract from their paper an ordering on words that is \(M\)-suitable.
In our setting, \(M\) is the trace set given by a Petri net.
The ordering is a “state-based” ordering defined upon the underlying Petri net.
The definition relies on the notion of \(\omega\)-markings which, intuitively, are vectors with as many entries as there are places in the Petri net and such that each entry takes its value in \(\mathbb{N} \cup \{\omega\}\).
Let \(P\) be the set of places of the Petri net, then \(\mathbb{N}^P_{\omega}\) denote the set of \(\omega\)-markings thereof.
Given two \(\omega\)-markings \(\hat{M}\) and \(\hat{M}'\) they define (p.~481) an ordering \(\mathord{\leqslant}\) between them such that \(\hat{M} \leqslant \hat{M}'\) if \(\hat{M}(p) \leqslant \hat{M}'(p)\) for all \(p\in P\) such that \(\omega\) is greater than every natural.
They then lift the ordering \(\mathord{\leqslant}\) to sets of \(\omega\)-markings as follows.
Given two sets \(\mathcal{M},\mathcal{M'} \subseteq \mathbb{N}^P_{\omega}\) define \(\mathcal{M} \leqslant \mathcal{M'}\) if{}f for each \(M\in\mathcal{M}\) there exists \(M'\in\mathcal{M'}\) such that \(M \leqslant M'\).
This relation is a quasiorder on sets of \(\omega\)-marking which becomes a well-quasiorder when restricted to the finite sets of \(\omega\)-markings.
That is \(\mathord{\leqslant}\) is a wqo on \(\wp_{\text{fin}}(\mathbb{N}^P_{\omega})\) (their Corollary~2.17).

We are now in position to define our ordering on words (what they call traces).
Let \(u,v\in (\Sigma\setminus\{\tau\})^*\) define \[ u \lessdot v \text{ if{}f } \max(\mathcal{C}(\mathord{\stackrel{u}{\Rightarrow}}(\{M_0\}))) \leqslant\max(\mathcal{C}(\mathord{\stackrel{v}{\Rightarrow}}(\{M_0\})))\enspace.
\]
Let us discuss this definition starting with the expressions \(\max(\mathcal{C}(\mathord{\stackrel{u}{\Rightarrow}}(\{M_0\})))\) which denote finite sets of \(\omega\)-markings (i.e. \(\max(\mathcal{C}(\mathord{\stackrel{u}{\Rightarrow}}(\{M_0\})))\in\wp_{\text{fin}}(\mathbb{N}_{\omega}^P)\) so that the ordering \(\mathord{\leqslant}\) is the well-quasiorder described above).
The details of the definition \(\max(\mathcal{C}(\mathord{\stackrel{u}{\Rightarrow}}(\{M_0\})))\) would take too much space to include here but intuitively the finite set of \(\omega\)-markings is a finite description of the possibly infinitely many markings (which are \(\omega\)-markings with no entry set to \(\omega\)) the Petri net can reach guided by the sequence of actions in \(u\) from its initial marking \(M_0\) (the \(\mathcal{C}\) and the \(\max\) operator turn a possibly infinite sets of markings into a finite set of \(\omega\)-markings).

Let us now turn to the \(M\)-suitability of \(\mathord{\lessdot}\).
As we have already shown the ordering \(\mathord{\lessdot}\) is a wqo.
Second it is routine to check that monotonicity follows from their Lemma~2.5(a) and Lemma~2.9 \cite{jancarPetriNetsRegular1999}.
Indeed Lemma~2.5(a) states that if \(\mathcal{M}\leqslant\mathcal{M'}\) then \(\mathord{\stackrel{a}{\Rightarrow}}(\mathcal{M}) \leqslant \mathord{\stackrel{a}{\Rightarrow}}(\mathcal{M'})\) for every \(a\in\Sigma\setminus\{\tau\}\); while Lemma~2.9 states that \(\mathcal{M}\leqslant\mathcal{M'}\) then \(\max(\mathcal{C}(\mathcal{M})) \leqslant \max(\mathcal{C}(\mathcal{M'}))\).

Next we show that \(M\)-preservation holds by simply observing that \(w\) is a trace of the Petri net if{}f \(\mathord{\stackrel{w}{\Rightarrow}}(\{M_0\})\neq\emptyset\), hence that if \(u \lessdot v\) and \(u\) is a trace then \(v\) is a trace by definition of \(\mathord{\lessdot}\), their Lemma~2.7(a) showing \(\mathcal{M}\subseteq\mathcal{C}(\mathcal{M})\), and the definition of \(\max\) (p.~482).

It remains to show that \(\mathord{\lessdot}\) is decidable which is easily established by first using the result of their Lemma 2.15 stating that we can effectively construct \(\max(\mathcal{C}(\mathord{\stackrel{a}{\Rightarrow}}(\{\mathcal{M}\}))\) given a finite set \(\mathcal{M}\) of \(\omega\)-markings and an action \(a\in\Sigma\); and second by using the result of their Lemma~2.14 to lift (inductively) their effectivity result from actions to finite sequences of actions.
Notice that this effectivity result for finite sequences of actions also gives a decision procedure for the membership in the trace set of the Petri net.

With all the ingredients in place, we obtain an algorithm deciding whether the trace set of a finite-state labelled transition system is included into the trace set of a Petri net.
It is worth pointing out that their algorithm \cite[Theorem~3.2]{jancarPetriNetsRegular1999} performs less comparisons of finite sets of \(\omega\)-markings than ours.
A tree structure restricts the comparisons they do during the exploration to pairs of sets of \(\omega\)-markings, provided one is the ancestor (w.r.t.\ tree structure) of the other.
Since our algorithm does not have such a restriction, we claim that we are making at least as many comparisons as their algorithm.
This potentially results in a shorter execution time and less membership checks in our algorithm.

\subsection{Languages of Infinite Words}
In this section we give the idea on how we extended the framework to the inclusion problem between languages of infinite words~\cite{DGPR21}.
In particular we studied the inclusion between two \(\omega\)-regular languages and the inclusion of an \(\omega\)-context-free language into an \(\omega\)-regular language.

Consider the inclusion problem \(L^{\omega}(\auto)\subseteq M\) where \(\auto\) is a BA and \(M\) is an \(\omega\)-regular language.
The first step is to reduce the inclusion check to the ultimately periodic words of the languages, as they suffice to decide the inclusion \cite{calbrixUltimatelyPeriodicWords1994}.
These are infinite words of the form \(uv^{\omega}\) for \(u\in \Sigma^{*}\) a finite prefix and \(v\in \Sigma^{+}\) a finite period.
We compare two ultimately periodic words using a pair of quasiorders \(\ltimes_{1}\) and \( \ltimes_{2}\) on finite words.
The first quasiorder, \(\ltimes_{1}\), compares the prefixes of ultimately periodic words, while the second quasiorder, \( \ltimes_{2}\), compares their periods.
As expected the quasiorders need to be \(M\)-preserving right-monotonic decidable wqos.
In this setting \(M\)-preservation means that if \(uv^{\omega}\in M\) and \((u,v) \ltimes_{1} \times \ltimes_{2} (u',v')\) then \(u'v'^{\omega}\in M\).
For example, by taking \(\ltimes_{1}\) to be \(\leq_{post}^{\mathcal{B}}\) and \(\ltimes_{2}\) to be \(\leq_{\text{ctx}}^{\mathcal{B}}\), where \(\mathcal{B}\) is a BA such that \(L^{\omega}(\mathcal{B})=M\), we have an \(M\)-preserving pair of right-monotonic decidable wqos.

An \(M\)-preserving pair of wqos ensures the existence of a finite subset \(S=S_{1}\times S_{2}\) of ultimately periodic words that is sufficient to decide the inclusion.
In order to compute such a subset \(S\) we establish a fixpoint characterization for the sets of prefixes and periods of \(L\).
Finally, we decide the inclusion by checking membership in \(M\) for every ultimately periodic word \(uv^{\omega}\) such that \((u,v)\in S\).

In the case where \(\auto\) is a Büchi Pushdown (and therefore where \(L^{\omega}(\auto)\) is an \(\omega\)-context-free language), the only difference lies in the fixpoint functions which now use both right and left concatenations.
Thus the pair of quasiorders in this case should be monotonic.
For example we can take both orders to be \(\leq_{\text{ctx}}^{\mathcal{B}}\).

\subsection{Pointers to Other Cases}
In our paper~\cite{doveriAntichainsAlgorithmsInclusion2023}, the approach is adapted to the inclusion problem between Visibly Pushdown Languages (VPL) (of finite words) and \(\omega\)-VPL (of infinite words), classes of languages defined in~\cite{AlurM04}.
In these cases, defining monotonicity conditions is challenging.

In the work by Henzinger et al.~\cite{nico}, the authors adapt our framework to solve their inclusion problem between operator-precedence languages~\cite{Floyd63}.
These languages fall within a class that is strictly contained in deterministic context-free languages and, in turn, strictly contains VPL~\cite{CRESPIREGHIZZI20121837}.

In~\cite{doveriFORQBasedLanguageInclusion2022}, we further extend the framework for solving the inclusion \(L^{\omega}(\auto)\subseteq M\), for \(\auto\) an BA and \(M\) an \(\omega\)-regular language, using a family of quasiorders instead of a pair of quasiorders.
A family of quasiorders allows more pruning when searching for a counterexample, thus lesser membership queries at the end.

As we show in~\cite{GantyRV2021} it is also possible to define an \(M\)-suitable quasiorder leveraging pre-computed simulation relations on the states of an automaton for \(M\).
This quasiorder might be coarser than the state-based one presented in Section~\ref{sec:example}.

Some directions for future work are the inclusion of a context-free language into a superdeterministic context-free language~\cite{greibachSuperdeterministicPDAsSubcase1980}, and the inclusion between tree automata.
It is worth noting that least fixpoint characterizations exist for a very large number of language classes \cite{Sorin82}.
Eventually, we want to explore whether our approach can be adapted to the emptiness problem of alternating automata for finite and infinite words.
The PhD thesis of Nicolas Maquet \cite{maquetNewAlgorithmsData2011} suggests it can be done.
In \cite{bonchiCheckingNFAEquivalence2013}, Bonchi and Pous compare antichains and bisimulation up-to techniques.
We want to extend this comparison,  given the novel insights of our approach.

	{
		\fontsize{9pt}{10.8}
		\selectfont
		\paragraph{Acknowledgments.}
		Pierre visited Javier during his first year of PhD, a visit that turned out to be a milestone in Pierre's career and has had influence up to this day.
		This visit also got Pierre a new colleague, a mentor and, most importantly, a friend.
		Pierre wishes to thank Javier from the bottom of his heart for all the good memories throughout the years.
		Chana is grateful to have had Javier as a PhD advisor and as an academic role model. She learned a lot from him, and benefited from the kind and studious atmosphere that he has established in his Chair at the Technical University of Munich.
		We also are thankful to the reviewers for their valuable feedback.
		This publication is part of the grant PID2022-138072OB-I00, funded by MCIN, FEDER, UE and has been partially supported by PRODIGY Project (TED2021-132464B-I00) funded by MCIN and the European Union NextGeneration.

		\paragraph{Disclosure of Interests.}
		The authors have no competing interests to declare that are relevant to the content of this article.
	}

\bibliographystyle{splncs04}
\bibliography{main}

\end{document}